\DeclareMathOperator{\Tr}{Tr}
\DeclareMathOperator{\rk}{rank}
\DeclareMathOperator{\im}{im}
\DeclareMathOperator{\s}{{\mathsf{S}}}
\newcommand{\mc}[1]{\mathcal{#1}}
\newcommand{\mbf}[1]{\mathbf{#1}}
\newcommand{\F}{\mathbb{F}}
\newcommand{\nix}[1]{}
\newcommand{\ket}[1]{|#1\rangle}
\newcommand{\be}{\begin{eqnarray*}}
\newcommand{\ee}{\end{eqnarray*}}
\newcommand{\ben}{\begin{eqnarray}}
\newcommand{\een}{\end{eqnarray}}
\newcommand{\ba}{\begin{array}}
\newcommand{\ea}{\end{array}}
\newtheorem{theorem}{Theorem}
\newtheorem{corollary}[theorem]{Corollary}
\newtheorem{lemma}{Lemma}
\newtheorem{defn}{Definition}
\begin{document}
\title{ Entropic Inequalities for a Class of Quantum Secret  Sharing States}
\author{Pradeep Sarvepalli}
\email[]{pradeep@phas.ubc.ca}
\affiliation{Department of Physics and Astronomy, University of British Columbia, Vancouver, BC, V6T 1Z1}

\date{September 2, 2010}

\begin{abstract}
It is well-known that von Neumann entropy is nonmonotonic  unlike Shannon entropy 
(which is monotonically nondecreasing). Consequently, it is difficult to relate the 
entropies of the subsystems  of a given quantum state. In this paper, we  show that if 
we consider  quantum secret sharing states arising from a class of monotone span programs,  
then we can partially recover the monotonicity of entropy for the so-called unauthorized 
sets. Furthermore, we can show for these quantum states the entropy of the authorized sets 
is monotonically nonincreasing.
\end{abstract}

\pacs{}
\keywords{ entropic inequalities,  monotone span programs, quantum secret sharing, 
stabilizer states,  von Neumann entropy }

\maketitle
\section{Introduction}

In this paper we   are motivated by the fact that one of the reasons why von Neumann 
entropy  behaves differently from the Shannon entropy is rooted in its 
nonmonotonicity.  A consequence of this fact is that many results and techniques of 
classical information theory do not smoothly generalize; one has to frequently overcome 
the obstacles imposed by the breakdown of monotonicity, often manifesting in various 
disguises. In classical secret sharing, one makes extensive
use of information theoretic inequalities to bound the performance of secret sharing 
schemes, see  \cite{capocelli93,kurosawa94,blundo94,vandijk95,csirmaz97,beimel08}. In most cases these techniques do not appear to carry over to quantum secret 
sharing schemes, at least not easily. Quantum secret sharing has grown rapidly since its inception in \cite{hillery99}. Despite the growing body of literature on quantum secret sharing, see for instance \cite{hillery99, cleve99, gottesman00, smith00, markham08,keet10} and the
references therein, only a few of them,  most notably \cite{imai04,rietjens05}, have succeeded in employing information theoretic methods for quantum secret sharing.

This paper attempts to  contribute along this direction by studying the von Neumann entropy of subsets of a given quantum 
state. Of course, for an arbitrary quantum state we cannot expect a relation similar to the
Shannon entropy. However, by imposing some restrictions on the  quantum states, namely
by  considering a class of quantum secret sharing states, 
we are able to prove something definite. 

Our main result is that if we consider 
a class of secret sharing quantum states, then we can prove that the 
von Neumann entropy is
monotonically nondecreasing for certain subsystems of the quantum state, namely the 
unauthorized sets. The secret sharing states that we consider in this correspondence are those
arising from a realization of the quantum secret sharing scheme via monotone span programs in a ``normal form''. 
We also show that for the subsystems that are authorized, the 
entropy is monotonically nonincreasing.

\section{Background}
Before presenting our main result, we review briefly the relevant background. In this 
paper a quantum secret sharing scheme refers to a protocol to distribute an unknown 
quantum state to a group of $n$ players so that only authorized subsets of players can 
reconstruct the secret quantum state \cite{hillery99, cleve99, gottesman00}. The state 
received by any participant is called a share. 
The set of players is denoted as $P= \{1,2,\ldots,n \}$. The 
collection of authorized sets, denoted as $\Gamma$, is called the access structure
and the collection of unauthorized sets, denoted $\mc{A}$, is called the adversary 
structure. The dual access structure is defined as 
\ben
\Gamma^\ast = \{ A\subseteq P \mid \bar{A} \not\in \Gamma \},\label{eq:dualAcc}
\een
where $\bar{A}= P\setminus A$. 

An access structure $\Gamma$ can be realized by a  quantum
secret sharing scheme if and only if it satisfies $\Gamma \subseteq \Gamma^\ast$, see \cite{smith00,gottesman00}. 
An access structure is said to be self-dual if $\Gamma=\Gamma^\ast$. A self-dual access structure can be realized as quantum secret sharing scheme so that pure states are encoded as pure states; the associated 
secret sharing scheme is said to be a pure-state scheme. If $\Gamma\subsetneq \Gamma^\ast$, then the associated secret sharing scheme encodes some pure state into a mixed state;
such schemes are called mixed-state schemes.

An authorized set is said to be minimal if every proper subset of it is unauthorized. 
Minimal authorized sets completely characterize an access structure and  we denote 
the collection of minimal authorized sets of $\Gamma$ by $\Gamma_{\min}$. 

A secret sharing scheme is said to be connected if every participant occurs in some 
minimal authorized set. 
A participant who does not occur in any minimal authorized set is said to be unimportant
and the share associated to that player can be discarded without loss. 
We can assume that the access structure is defined on the reduced  set
of players excluding the unimportant players. In this paper we
consider only connected secret sharing schemes, in other words, every player is important.

Closely related to this idea of important share is the notion of dispensable components
of a share \footnote{I would like to thank Daniel Gottesman for bringing up this point.}. Suppose a part of the share received by a player does not play a role in the
reconstruction of the secret in any of the minimal authorized sets to which the player
belongs, then that part of the share can be discarded without any loss. We call such parts
of the share dispensable components and indispensable otherwise. One way to create such
components is to distribute additional quantum states that are unrelated to the secret.
Such components neither aid nor hinder a subset's 
ability to reconstruct the secret.
The  access structure realized by the scheme is unaffected if such 
components of the share are discarded. We assume that the secret sharing schemes considered in this paper are devoid of such dispensable components in any share. 

\begin{defn}[CSS secret sharing schemes]
A pure-state secret sharing scheme  $\Sigma$ is said to be a CSS secret sharing scheme if 
the encoding for the scheme is of the form 
\ben
\ket{s} \mapsto \frac{1}{\sqrt{|C|}}\sum_{c\in C} \ket{s \overline{X}+c},\label{eq:cssScheme}
\een
where $s\in \F_q$ and $\overline{X}\in \F_q^n$. 
A mixed-state secret scheme is said to be CSS type if it can be obtained by discarding one
share of a pure-state CSS scheme. 
\end{defn}
As mixed-state schemes can be obtained from pure-state schemes, see \cite{gottesman00},   without loss of generality we can focus our attention on pure-state schemes. In terms of access structures this implies we can focus on the self-dual access structures. 

Smith proposed a method to realize a quantum secret sharing scheme for any quantum
access structure using monotone span programs. We will need parts of that construction
to prove our result and therefore provide a quick review of the same. Further details can
be found in \cite{smith00}. 

\begin{defn}[Montone span program]
A monotone span program $\mc{M}$ is a triple $(\F_q, M, \psi)$ consisting of a
matrix $M\in \F_q^{d\times e}$ over $\F_q$ and surjective function $\psi : \{1,2,\ldots, d \} \rightarrow P$. The program is said to accept $A\subseteq P$ if and only if 
\ben
\varepsilon_1=(1,0,\ldots, 0)^t \in \im(M_A^t), \label{eq:mspCond}
\een
where $M_A$ is the submatrix of $M$ and consists of all the rows that correspond to the participants in $A$.
\end{defn}

We say $M$ is the span matrix of the span program. 
A monotone span program is said to compute an access structure $\Gamma$  if 
$\mc{M}$ accepts $A\in \Gamma$ and rejects $A\not\in \Gamma$.
It follows that for any unauthorized
set $B\subseteq P$, we must have $\varepsilon_1 \not\in \text{Im}(M_B^t)$. Alternatively,
there exists some $v=(v_1,\ldots, v_e )\in \F_q^e$ such that $M_B v = 0 $ and $v_1\neq 0$. 

For every access structure $\Gamma$, we can realize it using a 
monotone span program in a ``normal form''. This is not necessarily efficient 
but these realizations provide us with the secret sharing states which are central for our 
discussion.  We caution the reader that our terminology of  span programs, especially the 
usage of normal form for  monotone span programs,  does not follow the standard 
terminology in the literature on span programs.

\begin{defn}[Normal form monotone span program]
Suppose that $\Gamma_{\min} =\{ A_1,\ldots A_k\}$ and let $|A_i|-1=r_i$.
Also let, $r=\sum_{i} |A_i|$ and $c=r-k=\sum_i r_i$. The monotone span program that
computes $\Gamma_{\min}$ over $\F_q$ has the span matrix $M$ as given in equation~\eqref{eq:msp}.  The mapping 
$\psi$ is given as follows: 
The rows corresponding to submatrix $I_{r_i}$ corresponds
to the first $|A_i|-1$ participants of the authorized set $A_i$. The row of the form 
$(1, 0,\ldots, 0,-\mbf{1},0, \ldots, 0)$ is labeled by the last participant of  $A_i$ and
$-\mbf{1} =(-1,\ldots,-1) \in \F_q^{r_i}$.
\end{defn}
\ben
M= \left[ \ba{ccccccc}
&I_{r_1}&&&&\\
1&-\mbf{1}&&\\ \hline
&&I_{r_2} &&&\\
1&&-\mbf{1}&\\ \hline
\vdots&&&\ddots\\ \hline
&&& &I_{r_i} \\
1&&&&-\mbf{1}\\ \hline
\vdots&&&&&\ddots\\ \hline
&&&&& &I_{r_k} \\
1&&&&&&-\mbf{1}\\
\ea\right]\label{eq:msp}
\een

Classically this matrix defines the encoding for the secret sharing scheme. 
The scheme implicitly assumes an ordering of the participants and the use
of $c$ random variables in addition to the secret. For each minimal authorized set $A_i$, the scheme distributes
$s_{i,1},s_{i,2}\ldots, s_{i,r_i}$ and $s-\sum_j s_{i,j}$, where 
$s_{i,j}$ for $1\leq j\leq r_i$ are random variables. More precisely, 
$s_{i,j}$ is distributed to the $j$th player of $A_i$ for $1\leq j\leq r_i$
and $s-\sum_{j}s_{i,j}$ to the last player in $A_i$. 
Considering the distribution scheme for the entire set of players we distribute 
 $u M^t$, 
where $u =(s, s_1,\ldots, s_{c})$. 
In the normal form construction, no player is unimportant and no share has
a dispensable component.

We make the following useful observations about the structure of $M$ as given in equation~\eqref{eq:msp}.
\begin{compactenum}[i)]
\item The rows associated to the players in any authorized set are linearly
independent.
\item The rank of $M$ is given by $1+r_1+\cdots+r_k$.
\item 
The rows associated with any given participant are linearly independent.
\item For each column $2\leq j\leq c+1$ in $M$, there are exactly two rows in $M$
which have nonzero support on $j$ and they both belong to the same authorized set.
\end{compactenum}
For a vector $v=(v_1,\ldots, v_n)\in \F_q^n$, we refer to the support of $v$
as the set of coordinates for which $v_i\neq 0$.

In the quantum secret sharing scheme due to Smith \cite{smith00} 
the encoded state is given by 
\ben
\ket{s} \mapsto \frac{1}{\sqrt{q^c}}\sum_{u\in \F_q^{c+1}:u_1=s} \ket{u M^t},\label{eq:mspQss}
\een
where we assume that the $u=(s,s_1\ldots,s_c) \in \F_q^{c+1}$. Please note that we do not
need to use any additional random variables in the quantum secret sharing schemes
as opposed to the classical schemes. We also note that the construction in \cite{smith00}
will hold for span matrices that are not in normal form; however, the associated access 
structure must still be self-dual.
The reader might find it helpful to refer to a small example given in the appendix.

We denote the  von Neumann entropy of a quantum state with the density matrix $\rho$ 
by $\s(\rho)=  -\Tr (\rho \log_2 \rho)$.
The following relation on the entropies of the authorized and unauthorized sets of
a quantum secret sharing scheme realized by monotone span programs was shown in 
\cite[Lemma~17]{rietjens05}.
\begin{lemma}[\cite{rietjens05}]\label{lm:entSets}
Suppose $\mc{M} =(\F_q, M, \psi )$  is a monotone span program realizing a quantum access structure 
$\Gamma$. Let $A \in \Gamma$ and $B=P\setminus A$. Let $\rk(M_A)=a$, $\rk(M_B)=b$ and
$\rk(M)=m$ and $\s(S)$ the entropy of the secret $S$.
 Then we have
\ben
\s(A) & = & (a+b-m)\log_2q + \s(S) \label{eq:entAuthSet}\\
\s(B) & = & (a+b-m)\log_2q \label{eq:entunAuthSet}
\een
\end{lemma}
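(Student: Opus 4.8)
The plan is to compute both entropies directly from the explicit form of the encoded state in equation~\eqref{eq:mspQss}. Write $\ket{\psi} = q^{-c/2}\sum_{u:u_1=s}\ket{uM^t}$ for a fixed secret value $s$, and more generally consider the mixture over secret values with distribution $p(s)$ realizing $\s(S)$; this way the reduced density operators on $A$ and on $B=\bar A$ are built from the strings $uM^t$. The key structural fact I would isolate first is that the map $u \mapsto u M_A^t$ has image of dimension $a=\rk(M_A)$ and the map $u\mapsto uM_B^t$ has image of dimension $b=\rk(M_B)$, while $u\mapsto uM^t$ has image of dimension $m=\rk(M)$. Since $A$ is authorized, $\varepsilon_1\in\im(M_A^t)$, so the secret coordinate $u_1$ is a (linear) function of the $A$-share $uM_A^t$; equivalently, measuring the $A$-subsystem in the computational basis determines $s$. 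I would set up a change of variables on $\F_q^{c+1}$ splitting $u$ into the part that survives in the $A$-share and the part in its kernel, so that $\rho_A$ becomes a direct sum over the $q^a$ possible $A$-strings, each block a maximally mixed state of rank $q^{\,(\text{dim ker on }A)}$ restricted appropriately, and track how the secret-dependence sits inside this decomposition.

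The next step is the counting. For a fixed computational-basis outcome $x=uM_A^t$ on $A$, the set of $u$ consistent with it is a coset of $\ker M_A^t$ (dimension $c+1-a$); pushing this coset through $M_B^t$ gives the conditional state on $B$, which is maximally mixed on a subspace whose dimension is $(c+1-a) - \dim\ker(M_B^t|_{\ker M_A^t})$. A short rank argument — using $\rk(M)=m$ and that the column space of $M$ is the sum of the column spaces of $M_A$ and $M_B$, so $\dim(\im M_A^t \cap \im M_B^t)=a+b-m$ — yields that this conditional $B$-state has entropy exactly $(a+b-m)\log_2 q$, independent of $x$ and of $s$. Then $\s(B)=\sum_x p(x)\,\s(B|x) + $ (classical entropy of the outcome distribution on $A$)$\cdot 0$; care is needed here: $\s(B)$ equals the average conditional entropy only because, after tracing out $A$, the residual state on $B$ for distinct $A$-outcomes $x$ live on the \emph{same} subspace in a way that the mixture does not add classical entropy — this is exactly where unauthorizedness of $B$ enters, guaranteeing $u_1$ is not recoverable from $B$ and the $B$-marginal does not "see" which $x$ occurred beyond the coset structure. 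Establishing this cleanly, rather than just counting dimensions, is what I expect to be the main obstacle.

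For the authorized set, I would use the purity of the global state: since $\ket{\psi}$ (for fixed $s$) is pure, $\s(A)=\s(B)$ for that fixed $s$, giving $(a+b-m)\log_2 q$ conditioned on the secret. But now, unlike the $B$ side, the secret \emph{is} recoverable from $A$, so mixing over $s$ with distribution $p(s)$ adds the full classical entropy $\s(S)$: the states $\rho_A^{(s)}$ for distinct $s$ are supported on orthogonal subspaces (distinguished by the computational-basis value of the linear functional $\varepsilon_1$ applied to the $A$-string), so $\s(A)=\sum_s p(s)\s(A^{(s)}) + H(p) = (a+b-m)\log_2 q + \s(S)$. This orthogonality claim — that authorized reconstruction implies the $A$-marginals for different secrets are perfectly distinguishable and hence block-diagonalize the mixture — is the other point that needs a careful argument, and it is the mirror image of the obstacle on the $B$ side. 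Once both distinguishability statements are in hand, equations~\eqref{eq:entAuthSet} and~\eqref{eq:entunAuthSet} follow by assembling the pieces.
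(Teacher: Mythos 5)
First, note that the paper itself does not prove this lemma: it is quoted from \cite[Lemma~17]{rietjens05}, so there is no in-paper proof to compare against. Your overall strategy --- computing both marginals of the coset state \eqref{eq:mspQss} by dimension counting, and using purity of $\ket{\psi_s}$ together with orthogonality of the $A$-marginals for distinct secret values (which follows, as you say, from $\varepsilon_1\in\im(M_A^t)$ making $u_1$ a linear functional of the $A$-string) to add $\s(S)$ on the authorized side --- is the standard and correct route, and the authorized-set half of your argument is essentially sound. The identity $\dim(\im M_A^t\cap\im M_B^t)=a+b-m$ is also correct, since the row space of $M$ is the sum of the row spaces of $M_A$ and $M_B$.

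The gap is in the unauthorized-set computation. Conditioning on a computational-basis outcome $x=uM_A^t$ does not leave $B$ in a maximally mixed state: it leaves $B$ in the \emph{pure} state proportional to $\sum_{v\in\ker M_A}\ket{(u+v)M_B^t}$, a uniform superposition over a coset of $W:=\{vM_B^t: v\in\ker M_A\}$. Your own dimension count already signals the mismatch: $(c+1-a)-\dim(\ker M_A\cap\ker M_B)=(c+1-a)-(c+1-m)=m-a$, which is $\dim W$ and is \emph{not} equal to $a+b-m$ in general (in the appendix example with authorized $A=\{2,3\}$ one has $m-a=0$ while $a+b-m=2$). Consequently the two terms in your chain-rule decomposition are swapped: the conditional entropies $\s(B|x)$ all vanish, and the whole of $\s(B)$ comes from the classical mixture that you set to zero. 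The correct accounting is that the coset states for different $x$ are pairwise equal or orthogonal, and as $x$ ranges over the $A$-strings consistent with $u_1=s$ they sweep out exactly $q^{b}/q^{m-a}=q^{a+b-m}$ distinct orthogonal pure states with equal weight (unauthorizedness of $B$ enters here, to guarantee that $\{uM_B^t:u_1=s\}$ fills the entire $b$-dimensional row space of $M_B$ independently of $s$); hence $\rho_B$ is maximally mixed on a subspace of dimension $q^{a+b-m}$. Your final numbers are right, but the mechanism you describe --- and correctly flag as the main obstacle --- would not survive being written out, so this step needs to be redone along the lines above.
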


\section{Entropic Inequalities}

In this section we  prove our main result for monotonicity of entropy for the 
unauthorized sets of a quantum secret sharing scheme realized via the
construction given in the previous section. We need the following lemma.

\begin{lemma}\label{lm:submatIndep}
Suppose that $\mc{M}=(\F_q,M,\psi)$ is a monotone span program in normal form 
computing the
access structure $\Gamma$. Let  $A\cup B \cup \{p \}=P$ be a partition of $P$ where $B\in 
\Gamma$. Let $A_i\in \Gamma_{\min}$ be a minimal authorized set that contains $p$.
Denote by $M_{\{A_i \}}$ the submatrix of $M$ with $|A_i|$ rows that correspond to
the players in $A_i$. Then one of the following holds:
\begin{compactenum}[i)]
\item If $A_i \subseteq \bar{A}$, then any row in $M_{\{A_i \}}$ is linearly dependent in 
$M_{\bar{A}}$ and linearly independent in $M_{A'}$, where $A'=A\cup \{ p \}$.
\item If $A_i \not\subseteq \bar{A}$, then any row in $M_{\{A_i \}}$ is linearly independent in 
$M_{\bar{A}}$ and linearly independent in $M_{A'}$.
\end{compactenum}
\end{lemma}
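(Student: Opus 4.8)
The plan is to analyze the structure of the span matrix $M$ in normal form (equation~\eqref{eq:msp}) block by block, using the key observation~iv): every column $j$ with $2\le j\le c+1$ has exactly two nonzero entries, and both lie in rows belonging to the same minimal authorized set. This means the rows of $M_{\{A_i\}}$ interact with the rest of $M$ only through the columns "owned'' by $A_i$, so the linear (in)dependence of rows of $M_{\{A_i\}}$ inside any submatrix $M_S$ depends only on which of the $|A_i|$ rows of $A_i$ are present in $S$. Concretely, the $|A_i|$ rows associated with $A_i$ occupy a single column block: the first $r_i=|A_i|-1$ of them form the identity block $I_{r_i}$ (on the $A_i$-owned columns), and the last one is $(1,0,\dots,0,-\mathbf 1,0,\dots,0)$, which on the $A_i$-owned columns reads $-\mathbf 1$ and has a single $1$ in the shared first column. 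First I would record the elementary fact that within this block the full set of $|A_i|$ rows satisfies exactly one linear relation, namely the sum of all $|A_i|$ rows equals $\varepsilon_1=(1,0,\dots,0)$; equivalently $\mathrm{rank}(M_{\{A_i\}})=r_i$, while any proper subset of these rows is linearly independent (this is observation~i) applied to $A_i$, together with observation~iii)).

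Next I would treat the two cases. \emph{Case i): $A_i\subseteq\bar A$.} Then all $|A_i|$ rows of $A_i$ appear in $M_{\bar A}$, so the relation "sum of the $|A_i|$ rows $=\varepsilon_1$'' shows that $\varepsilon_1\in\mathrm{Im}(M_{\bar A}^t)$ — consistent with $\bar A=B\cup\{p\}\supseteq A_i$ being authorized — and in particular each row of $M_{\{A_i\}}$ is the $\varepsilon_1$-vector minus the sum of the other $|A_i|-1$ rows, hence linearly dependent on the remaining rows of $M_{\bar A}$ (here I use that $\varepsilon_1$ is itself in the span of some other minimal authorized set's rows inside $B$, or more simply that $\varepsilon_1$ is the sum of the remaining rows of $A_i$ together with... — the cleanest statement is: the $|A_i|$ rows of $A_i$ are dependent, so each is dependent on the others, which are rows of $M_{\bar A}$). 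For independence in $M_{A'}$ where $A'=A\cup\{p\}$: since $p$ belongs to $A_i$ but not every member of $A_i$ is in $A'$ (indeed $A_i\subseteq\bar A$ forces $A_i\cap A'=\{p\}$ only if $A_i\cap A=\emptyset$, but in general $A'$ contains only those members of $A_i$ lying in $A$, plus $p$) — in any case $A'$ cannot contain all of $A_i$ (as $A_i\subseteq\bar A$ and $A\cap\bar A=\emptyset$, so $A'\cap A_i\subseteq\{p\}$), hence the rows of $A_i$ present in $M_{A'}$ form a proper subset of the $A_i$-block and are therefore linearly independent; moreover, because the $A_i$-owned columns (columns $2$ through $r_i+1$ of that block, say) have nonzero support only on $A_i$-rows, adjoining the other rows of $M_{A'}$ — which live in disjoint column blocks apart from the shared column $1$ — cannot create a dependence, so the rows remain independent in all of $M_{A'}$.

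\emph{Case ii): $A_i\not\subseteq\bar A$.} Then at least one row of $A_i$ is missing from $M_{\bar A}$, so the rows of $A_i$ present in $M_{\bar A}$ form a proper subset of the $A_i$-block and are linearly independent among themselves; by the same column-support argument (only column $1$ is shared with other blocks, and each other block contributes its own independent structure) they stay independent in $M_{\bar A}$. The argument that they are independent in $M_{A'}$ is identical to Case i). The main obstacle is making the "column-support decoupling'' argument fully rigorous: one must argue that a linear relation among rows of $M_S$ (for $S=\bar A$ or $S=A'$) restricts, on the $A_i$-owned columns, to a relation among the $A_i$-rows of $M_S$ alone — this is immediate from observation~iv) since no row outside $A_i$ has support on those columns — and then invoke the block fact above. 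Once that reduction is set up cleanly, the two cases are short. I would also double-check the edge case $r_i=0$ (i.e.\ $|A_i|=1$), where the "block'' is the single row $(1,0,\dots,0)$ and the statements degenerate appropriately.
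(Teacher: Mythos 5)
Your overall strategy --- exploiting the block structure of $M$ and the fact that the columns owned by $A_i$ have nonzero support only on the rows of $A_i$ --- is the same as the paper's, and your independence arguments (the row of $p$ is independent in $M_{A'}$ in case i), and in both $M_{\bar A}$ and $M_{A'}$ in case ii)) are essentially the paper's support-counting argument in a somewhat cleaner packaging. But there is a genuine error at the crux of case i). You assert that the $|A_i|$ rows of the block satisfy ``exactly one linear relation, namely the sum of all $|A_i|$ rows equals $\varepsilon_1$,'' that consequently $\rk(M_{\{A_i\}})=r_i$, and finally (your ``cleanest statement'') that ``the $|A_i|$ rows of $A_i$ are dependent, so each is dependent on the others.'' This is false: a linear relation is a combination equal to $0$, and $\varepsilon_1\neq 0$. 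The block rows span $\varepsilon_1$ together with the $r_i$ unit vectors on the block columns, so $\rk(M_{\{A_i\}})=r_i+1=|A_i|$ and the rows are linearly \emph{independent} --- which is exactly what observation i) of the paper (which you cite) states. So the dependence of $p$'s row $R_1$ in $M_{\bar A}$ cannot come from within the block.

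Where it actually comes from --- and this is the paper's argument --- is the hypothesis $B\in\Gamma$, which your ``cleanest'' version never uses. Since $B=\bar A\setminus\{p\}$ is authorized, $\varepsilon_1\in\im(M_B^t)$, i.e.\ $\varepsilon_1=\sum_j\beta_jQ_j$ for rows $Q_j$ of $M_B$; combined with $R_1=\varepsilon_1-\sum_{j\ge 2}R_j$ and the fact that $R_2,\dots,R_{|A_i|}$ are rows of $M_B$ (because $A_i\subseteq\bar A$), this exhibits $R_1$ as a combination of rows of $M_{\bar A}$ other than itself. You do gesture at this in a parenthesis (``$\varepsilon_1$ is in the span of some other minimal authorized set's rows inside $B$'') before discarding it; that discarded remark is the correct proof and should be promoted to the main line. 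One further small gap: in case ii) you say the independence in $M_{A'}$ is ``identical to case i),'' but in case i) you used $A'\cap A_i=\{p\}$, which follows from $A_i\subseteq\bar A$ and is unavailable in case ii); there you need instead that $A_i\not\subseteq A'$, which holds because $A'=\bar B\notin\Gamma$ (as $B\in\Gamma$ and the access structure satisfies $\Gamma\subseteq\Gamma^\ast$), so that $M_{A'}$ still contains only a proper subset of the block and your column-restriction argument applies.
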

\begin{proof}
Since $A_i$ is authorized, $\varepsilon_1$ is in $\im (M_{A_i}^t)$ and
there exists some linear combination of the rows of $M_{A_i}$ such that
\be 
\sum_{j=1}^{|A_i|} R_j =\varepsilon_1,
\ee
where $R_j$ are the rows of $M_{A_i}$. Without loss of generality let $R_1$ be the
row associated with $p$ in $M_{\{A_i \}}$. 
Since $\bar{A}\setminus \{ p\} ={B}$ is still an authorized set, $\varepsilon_1$ is also in 
$\im(M_{{B}}^t)$ and there exists a linear combination of rows in ${B}$ such that
\be 
\sum_{j=1}^{|M_{B}|} \beta_j Q_j =\varepsilon_1 =  R_1 +\sum_{j=2}^{|A_i|} R_j,
\ee
where $Q_j$ are the rows of $M_{{B}}$ and $|M_B|$ denotes the number of rows of $M_B$. Then clearly, $R_1$
can be expressed as a linear combination of the rows in $M_{{B}}$. 

Next, we show that $R_1$ is independent in $M_{A'}$.
Assume that on the contrary that $R_1$
can also be expressed as a linear combination of the rows in $M_{A'}$ (excluding $R_1$). 
Now observe that the rows corresponding to $A_i$ are of two types: either they have a 
single non-zero element or they have $1+|A_i|$ nonzero elements. 
Permuting if necessary we can assume that $R_1$ is of the form $f=(0,1,0,\ldots, 0) \in \F_q^{c+1}$ or $g=(1,\mbf{-1},0,\ldots,0) \in \F_q^{c+1}$.  A necessary condition for 
$R_1$ be a linear combination of rows of $M_{A'}$ is that the support of these rows must 
contain the support of $R_1$. Keeping in mind that $A_i\setminus \{ p\} \subseteq B =\bar{A'}$, we
infer that if  $R_1=f$, then no row in $M_{A'}$ has overlap with the
support of $R_1$. If $R_1=g$, then 
some rows of $M_{A'}$ can have support in the first coordinate but not in the next 
$|A_i|-1$ coordinates, where $R_1$ is nonzero. Therefore it must be the case that $R_1$ is independent in 
$M_{A'}$. This proves the first part of the lemma.

Now let us consider the case when $A_i \not\subseteq \bar{A}$.  If $R_1=g$, then whether
it is in $M_{\bar{A}}$ or $M_{A'}$, the rows in $\bar{A}$ do not contain the support of
$R_1$ and similarly the rows in $M_{A'}$ do not contain the support of $R_1$. Hence it is
independent in both $M_{\bar{A}}$ and $M_{A'}$.  If $R_1=f$, then $g$ corresponds to some
other participant $p'$ who is in either $\bar{A}$ or $A'$ but not both as 
$A'\cap \bar{A} =\{ p\}$. Thus $g$ is in one of the  rows of $M_{A'}$ or $M_{\bar{A}}$ but not 
both. Consequently, $R_1$ is independent in one of $M_{A'}$ or $M_{\bar{A}}$.

Suppose that both $R_1$ and $g$ are in the same set $S$, where $S$ is either $A'$ or 
$\bar{A}$. Note that 
$A_i\not\subseteq \bar{A}$ and neither is $A_i\not\subseteq A'$ because 
$A'\not\in \Gamma$. It follows that $M_{\{ A_i\}}$ must contain other rows and the support 
of $g$ extends beyond the support of $R_1$ and the first coordinate. 
Then since $g$ is the only element whose support overlaps with $R_1$, any linear 
combination of rows that generates $R_1$ must include $g$. We could rewrite this linear
combination to express $g$ as a linear combination of the elements of $S$. But we have 
already seen that $g$ is linearly independent in both the sets $A'$ and $\bar{A}$.
Therefore, it follows that such a 
combination does not exist and $R_1$ is independent as well. This proves the second part
of the lemma.
\end{proof}

With this preparation we  are now ready to prove our central result.
\begin{theorem}\label{th:monotoneMSP}
Suppose that an  access structure $\Gamma$ is realized using the normal form 
monotone span construction. 
Let $A\subseteq B \subseteq P $.Then 
\ben
\ba{cl}\s(A) \leq \s(B) & \text{ if } A, B\not\in \Gamma \label{eq:unAuthEntropy}\\
\s(A) \geq \s(B)& \text{ if } A, B\in \Gamma \label{eq:authEntropy}
\ea \label{eq:entropyMonotone}
\een
\end{theorem}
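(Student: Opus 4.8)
The plan is to reduce the general nested case $A \subseteq B$ to the one-element-at-a-time case, so it suffices to prove both inequalities when $B = A \cup \{p\}$ for a single player $p \notin A$; the general statement then follows by induction along any chain from $A$ to $B$, noting that monotonicity of membership in $\Gamma$ keeps us inside the relevant regime (if $A,B \notin \Gamma$ then every intermediate set is unauthorized, and if $A,B \in \Gamma$ then every intermediate set is authorized). For the single-player step I would apply Lemma~\ref{lm:entSets} twice: once to the complementary pair $(\bar A, A)$ — wait, more carefully, I want to express $\s(A)$ and $\s(A \cup \{p\})$ using \eqref{eq:entAuthSet}–\eqref{eq:entunAuthSet} with respect to their complements. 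So write $A' = A \cup \{p\}$ and $\bar{A'} = \bar A \setminus \{p\}$. Then $\s(A) - \s(A')$ is controlled by the quantities $\rk(M_A), \rk(M_{\bar A}), \rk(M_{A'}), \rk(M_{\bar{A'}})$ and $\rk(M)$.

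The heart of the argument is a rank bookkeeping step. Adding the single player $p$ to $A$ adds $|A_i|$-type rows — precisely the rows $\psi^{-1}(p)$ — to $M_A$, and removes exactly those same rows from $M_{\bar A}$ to form $M_{\bar{A'}}$. So $\rk(M_{A'}) = \rk(M_A) + \delta_+$ and $\rk(M_{\bar{A'}}) = \rk(M_{\bar A}) - \delta_-$, where $\delta_+$ is the number of rows of $p$ independent over $M_A$ and $\delta_-$ is the number independent over $M_{\bar{A'}}$ (equivalently, the drop in rank upon deleting $p$'s rows from $M_{\bar A}$). The key inequality to establish is $\delta_+ + \delta_- \le |\psi^{-1}(p)|$ when $A' \notin \Gamma$ — giving $\s(A') \ge \s(A)$ after substituting into \eqref{eq:entunAuthSet} — and a complementary inequality $\delta_+ + \delta_- \ge |\psi^{-1}(p)|$ when $A \in \Gamma$, giving $\s(A') \le \s(A)$; here I must also use that the secret-entropy term $\s(S)$ is common to both $\s(A)$ and $\s(A')$ in the authorized case (both sets are authorized, so both pick up the same $+\s(S)$), and cancels in the difference. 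I would pin down $\delta_+$ and $\delta_-$ by invoking Lemma~\ref{lm:submatIndep}, which is tailored exactly to this situation: it classifies each minimal authorized set $A_i \ni p$ according to whether $A_i \subseteq \bar A$, and tells us that a row of $p$ coming from such an $A_i$ is dependent in $M_{\bar A}$ but independent in $M_{A'}$ (case i), and independent in both otherwise (case ii). Combined with observations (iii) and (iv) on the structure of $M$ — that $p$'s rows are mutually independent, and each relevant column meets only two rows — this should let me count $\delta_+$ and $\delta_-$ and verify the sharp relation between their sum and $|\psi^{-1}(p)|$.

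I expect the main obstacle to be the rank accounting for $\delta_-$, i.e.\ understanding exactly how much rank is lost when $p$'s rows are deleted from $M_{\bar A}$. Lemma~\ref{lm:submatIndep} as stated speaks about whether a \emph{single} row of $M_{\{A_i\}}$ is independent in $M_{\bar A}$ versus $M_{A'}$, but the rows of $p$ may come from several different minimal authorized sets simultaneously (since $\psi^{-1}(p)$ can have more than one element, one per authorized set through $p$), and I will need to argue that the dependencies among $p$'s rows within $M_{\bar A}$ and within $M_{A'}$ decouple across distinct $A_i$'s — this is where the column-support structure in observation (iv) does the real work, since a witnessing linear combination for one $A_i$ involves only columns supported in that block. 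Once that decoupling is justified, the inequality $\delta_+ + \delta_- \le |\psi^{-1}(p)|$ (resp.\ $\ge$) reduces to a per-block check that is immediate from the two cases of Lemma~\ref{lm:submatIndep}, and substituting into Lemma~\ref{lm:entSets} finishes the single-player step, hence the theorem.
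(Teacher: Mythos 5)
Your overall architecture is the paper's: reduce to a single-player step $B=A\cup\{p\}$, track how the rank of $M_A$ grows and the rank of $M_{\bar A}$ shrinks as the rows $\psi^{-1}(p)$ migrate from one side to the other, feed the ranks into Lemma~\ref{lm:entSets}, and worry (legitimately) about whether the single-row statements of Lemma~\ref{lm:submatIndep} accumulate over the several minimal authorized sets through $p$. But your ``key inequality'' is the wrong target. With your definitions, Lemma~\ref{lm:entSets} gives $\s(A')-\s(A)=(\delta_+-\delta_-)\log_2 q$, so what must be proved in the unauthorized case is $\delta_+\geq\delta_-$; a bound on the \emph{sum} $\delta_++\delta_-$ against $|\psi^{-1}(p)|$ neither implies this nor holds in the direction you claim. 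In fact Lemma~\ref{lm:submatIndep} shows that \emph{every} row of $p$ is independent in $M_{A'}$, so $\delta_+=|\psi^{-1}(p)|$ exactly, while $\delta_-$ equals the number of minimal authorized sets through $p$ \emph{not} contained in $\bar A$; thus $\delta_++\delta_-\geq|\psi^{-1}(p)|$, the opposite of your inequality, and the identity that closes the argument is $\delta_+-\delta_-=|A^p|\geq 0$, where $A^p$ is the collection of minimal authorized sets through $p$ that are contained in $\bar A$. This is precisely the computation in the paper.

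There are two further gaps. First, the authorized case cannot be handled by a ``complementary'' rank count as you propose: Lemma~\ref{lm:submatIndep} assumes $\bar A\setminus\{p\}\in\Gamma$, which holds when $A\cup\{p\}$ is unauthorized in a self-dual structure but fails when it is authorized. The paper instead deduces the authorized case from the unauthorized one by complementation, using $\s(\bar A)=\s(A)-\s(S)$ for authorized $A$ in a pure-state scheme, so the $\s(S)$ terms cancel as you anticipated but for a different reason. Second, you never address non-self-dual $\Gamma$, where Lemma~\ref{lm:entSets} cannot even be applied to a set $A$ with both $A$ and $\bar A$ unauthorized; the paper disposes of this by purifying to a self-dual access structure and tracing out the extra share. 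Your concern about decoupling across blocks, finally, is already discharged by the form of Lemma~\ref{lm:submatIndep}: each case~(ii) row is independent of \emph{all} remaining rows of $M_{\bar A}$, so deleting any number of them drops the rank by exactly their count, while each case~(i) row already lies in the row space of $M_{\bar A\setminus\{p\}}$ and costs nothing.
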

\begin{proof}
Let us first show this result assuming that $\Gamma $ is a self-dual access structure. 
The proof relies on Lemma~\ref{lm:entSets}. 
Without loss of generality we can assume that $|B\setminus A | = 1$ in other words, $B\setminus A = \{ p\}$. 
Let 
\be
A^p =\{A_i \in \Gamma_{\min}: A_i \subseteq \bar{A} \mbox{ and } p\in A_i \}\\
\bar{A}^{p} = \{A_i \in \Gamma_{\min} : A_i \not\subseteq \bar{A} \mbox{ and }  p\in A_i\}
\ee

Suppose that  $A^p\neq \emptyset$. 
Then there exists some $A_i\in \Gamma_{\min}$ such that $p\in A_i \subseteq \bar{A}$. 
Consider the rows associated to this set in $M$, i.e.
$M_{A_i}$. By Lemma~\ref{lm:submatIndep}, this row is dependent in $M_{\bar{A}}$
and therefore removing it will not change the rank of the resulting submatrix. On the 
otherhand, by the same lemma we know that this row is independent in $M_{A'}$, where
$A'=A\cup \{ p\}$, therefore the rank of the matrix obtained by adding this
row to $M_A$ is greater by one.

Repeating
this process for all the $A_i$ in $A_p$, we obtain a matrix with fewer rows than $M_{\bar{A}}$ but having the same rank. On the other hand, the 
rank of the matrix obtained by  adding these rows to $M_{A}$ increases for each element in $A^p$.

Now consider an authorized set $A_i\in \bar{A}^p$. Since $A_i$ is not a subset of $\bar{A}$,
it follows that some participant of $A_i$ must be in $A$. By Lemma~\ref{lm:submatIndep},
the row associated with $p$ in $M_{\{A_i \}}$ is independent in $M_{\bar{A}}$ as well
as $M_{A'}$. Therefore, the rank of the submatrix obtained by removing this row
from $M_{\bar{A}}$ diminishes by one while the submatrix obtained by adding this row to 
$M_{A}$ increases by one. Once again repeating this process for all the 
$A_i \in \bar{A}^p$ we see the the rank of $M_{A'}$ increases by $|\bar{A}^p|$, while the
rank of the submatrix obtained by removing all the rows associated with $p$ in $\bar{A}^p$
reduces its rank by $|\bar{A}^p|$.

Therefore adjoining to $M_A$ all the rows associated with $p$ gives us $M_{\bar{B}}$ and 
while, removing them from  $M_{\bar{A}}$ gives $M_B$. From the preceding discussion we see that $\rk(M_{\bar{B}})=\rk(M_{\bar{A}} ) - |\bar{A}^p|$ and 
$\rk(M_{B}) =\rk(M_{A}) + |A^p| + | \bar{A}^p|$. By Lemma~\ref{lm:entSets}, the entropy of $B$ is given by 
\be
\frac{\s(B)}{\log_2 q}& =& \rk(M_B)+ \rk(M_{\bar{B}})-\rk(M) \\
&=&  \rk(M_{A}) + |A^p| + | \bar{A}^p| +\rk(M_{\bar{A}}) - |\bar{A}^p|\\&&-\rk(M)\\
&=& \rk(M_{A})+\rk(M_{\bar{A}})    -\rk(M)+ |A^p|\\
&=&\frac{\s(A)}{\log_2 q}+|A^p| \geq \frac{\s(A)}{\log_2 q}
\ee
If $|B\setminus A | >1$, we can inductively apply this argument to 
every consecutive  pair of sets in the following chain 
\be
A= B_k\subset B_{k-1}\subset \cdots \subset B_1\subset B_0=B,
\ee
where $|B_{i+1}\setminus B_i |=1$.
Applying to each adjacent pair in the above chain gives us
\be
\s(A) \leq \s(B_{k-1}) \leq \cdots \leq \s(B_1) \leq \s(B).
\ee
This proves the theorem for the case when the access structure is self-dual and
$A, B\not\in \Gamma$.

If  $A \subseteq B\in \Gamma$, 
 then we note that both $\bar{B} \subseteq \bar{A} \not\in \Gamma$ and we must have $S(\bar{B}) \leq S(\bar{A})$. But we also know that for a self-dual access structure $\s(\bar{A}) = \s(A)-\s(S)$ if $A$ is authorized \cite{rietjens05,imai04}. Therefore,
$\s(B)-\s(S) \leq \s(A)-\s(S)$ and this proves the theorem when $\Gamma$ is self-dual.

Now suppose that $\Gamma$ is not a self-dual access structure, then we can purify it to 
get a self-dual access structure $\overline{\Gamma}$ for which 
equation~\eqref{eq:entropyMonotone} holds. Recall that the 
authorized (unauthorized) sets of $\Gamma$ are  also authorized (unauthorized) sets of 
$\overline{\Gamma}$ and the associated shares are obtained by tracing out the additional
participant used for purification, see \cite{gottesman00} for details about purification. 
Therefore, the result holds for any quantum access structure implemented via the
normal form  monotone span construction.
\end{proof}

\begin{corollary}\label{co:maximalEntropy}For an access structure realized via the
normal form monotone span construction, the following relations hold:
Among the authorized sets the minimal authorized sets of an access structure have maximal 
entropy and among the unauthorized sets the maximal unauthorized sets have maximal 
entropy.
\end{corollary}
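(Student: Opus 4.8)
The plan is to derive both statements of the corollary directly from Theorem~\ref{th:monotoneMSP} by a short chain-of-inclusions argument, using the fact that in a connected scheme with no dispensable components the "minimal authorized" and "maximal unauthorized" sets sit at the extreme ends of any monotone chain. First I would handle the authorized case. Let $A \in \Gamma_{\min}$ be a minimal authorized set and let $B \in \Gamma$ be any authorized set. If $A \subseteq B$, then Theorem~\ref{th:monotoneMSP}, equation~\eqref{eq:authEntropy}, gives immediately $\s(A) \geq \s(B)$. For a general authorized $B$ not containing $A$, the idea is that $B$ itself contains some minimal authorized set $A' \in \Gamma_{\min}$ (since every authorized set has a minimal authorized subset), so $A' \subseteq B$ yields $\s(A') \geq \s(B)$; thus every authorized set is dominated in entropy by some minimal authorized set, which is precisely the assertion that the minimal authorized sets have maximal entropy among authorized sets.

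Next I would treat the unauthorized case dually. Let $B$ be a maximal unauthorized set and let $A \in \mc{A}$ be any unauthorized set. If $A \subseteq B$, equation~\eqref{eq:unAuthEntropy} of Theorem~\ref{th:monotoneMSP} gives $\s(A) \leq \s(B)$. For general unauthorized $A$, observe that $A$ is contained in some maximal unauthorized set $B'$ (extend $A$ by adding players one at a time, staying unauthorized, until no further player can be added), and then $\s(A) \leq \s(B')$ by the same inequality. Hence every unauthorized set has entropy bounded above by that of some maximal unauthorized set, which is the claim.

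The only subtlety worth spelling out is why each intermediate set in the chain used to invoke Theorem~\ref{th:monotoneMSP} lies in the same class ($\Gamma$ or $\mc{A}$) as its endpoints: for the authorized case the chain $A' \subseteq B$ stays inside $\Gamma$ by monotonicity of the access structure (a superset of an authorized set is authorized), and for the unauthorized case the extension process is defined precisely so that each set remains unauthorized. So the chain hypothesis of the theorem is automatically met, and I do not expect any real obstacle; the corollary is essentially a packaging of the theorem's two inequalities together with the elementary order-theoretic facts that every authorized set contains a minimal one and every unauthorized set is contained in a maximal one.
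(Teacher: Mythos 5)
Your argument is correct and matches the paper's intent: the corollary is stated there as an immediate consequence of Theorem~\ref{th:monotoneMSP} with no written proof, and your packaging---every authorized set contains a minimal authorized subset, every unauthorized set extends to a maximal unauthorized one, then apply equations~\eqref{eq:authEntropy} and~\eqref{eq:unAuthEntropy} respectively---is exactly the argument the paper leaves implicit. Your final remark about intermediate sets is harmless but unnecessary at the level of the corollary, since the theorem's hypotheses only concern the two endpoints (and monotonicity of $\Gamma$ takes care of the chain inside the theorem's own proof).
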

Please note the the above corollary does not imply that all minimal (maximal) authorized
(unauthorized) sets have the same entropy.
Further, along with Lemma~\ref{lm:entSets} it implies that if we consider a minimal authorized set, the entropy of the sets obtained by either
adding participants or removing participants from the minimal authorized set
will be lower.

\section{Discussion}
An obvious question is if these results can be extended to all quantum secret sharing 
states arising via monotone span programs and more generally, to all secret sharing 
states.  While this result might be extended to
other classes of secret sharing states,  it does not seem to generalize
for arbitrary quantum secret sharing states.
Nonetheless, these results  could be prove to be useful and provide additional
interesting insights into quantum secret sharing states 
in that by partly recovering the monotonicity for the von Neumann entropy we may be
able to prove new constrained  inequalities for the von Neumann entropy.

\section*{Acknowledgment}

This research is sponsored by CIFAR, MITACS and NSERC. 
I would like to thank Robert Raussendorf and Daniel Gottesman  for helpful discussions.

{
\appendix
\section{An Example}
We provide a small example to illustrate the details of the construction of quantum 
secret sharing schemes from monotone span programs.
Consider the minimal access structure  
$\Gamma_{\min} = \left\{\{1,2 \}, \{2,3 \},\{3,1 \}\right\}$. The span matrix $M$ for this
access structure is  given by 
\be
M = \left[\ba{rrrr} 
0&1 &0&0\\
1&-1&0&0\\
0&0&1&0\\
1&0&-1&0 \\
0&0&0&1\\
1&0&0&-1
\ea\right].
\ee
The function $\psi : \{1,2,3,4,5,6 \} \rightarrow P$, takes the values $\psi(1)=1$, $\psi(2)=2$, $\psi(3)=1$, $\psi(4)=3$, $\psi(5)=3$ and $\psi(6)=1$.
Following equation~\eqref{eq:mspQss}, we find that the encoding (up to
normalization) for the quantum secret 
sharing scheme is given by 
\be
\ket{0} \mapsto \ket{000000}+\ket{110000}+\ket{001100}+\ket{000011}\\
+\ket{111100}+\ket{110011}+\ket{001111}+\ket{111111}\\
\ket{1} \mapsto \ket{101010}+\ket{011010}+\ket{100110}+\ket{101001}\\
+\ket{011110}+\ket{011001}+\ket{100101}+\ket{010101}
\ee
}

Then by Theorem~\ref{th:monotoneMSP} and Lemma~\ref{lm:entSets} we compute the
entropy for the unauthorized sets $\{\{\emptyset\}, \{1 \} \}$ is $0, \log_2 q$
respetively, while for the authorized sets $\{ \{1,2 \}, \{1,2,3 \} \}$ it 
is $\s(S)+\log_2q$
and $\s(S)$, where $\s(S)$ is the entropy of the secret.

In general, for self-dual access structures,
if we start with the empty set and keep adding participants the entropy first increases
until it becomes a minimal authorized set and then starts decreasing until it reaches 
$\s(S)$, giving a ``tent-like'' characteristic. More precisely consider the following chain of sets 
$\emptyset=B_0 \subsetneq B_1\subsetneq \cdots  \subsetneq B_{n-1}\subsetneq B_n=P$, 
such that $|B_i\setminus B_{i-1}|=1$. Then only one of these sets is a minimal authorized
set, say $B_j$. If we now plot the entropy 
of these subsets we typically get a plot similar to the figure shown below, with 
 $\s(B_0)=0$ and $\s(B_n)=\s(S)$ and the 
entropy peaking at the minimal authorized set $B_j$.
(Please note that the figure below is only representative and does not correspond to any
quantum access structure. For simplicity,
we assume that the secret is a completely mixed state; thus 
$\s(S)=\log_2 q$.)

\begin{tikzpicture}[domain=0:2] 
\draw[thin,color=gray!10!white,step=.5cm] (0,0) grid (4.4,3.4); 
\draw[thick,color=black,->] (0,0) -- (5,0);  
\draw[thick, color=black,->] (0,0) -- (0,4) node[left] {$\s(B_i)/\s(S)$};
\draw[thick,color=blue] (0,0) -- (0.5,0.5) --plot coordinates {(1,1) (1.5,2)
(2, 3) (2.5,2.5) (3,1.5) (3.5,1) (4,0.5)};
\draw[color=black] (0.0,0) node[below] {$B_0$};
\draw[color=black] (0.5,0) node[below] {$B_1$};
\draw[color=black] (2.0,0) node[below] {$B_{j}$};
\draw[color=black] (4,0) node[below] {$B_n$};
\draw[fill=gray!30!white] (0,0) circle (1pt );
\draw[fill=gray!30!white] (0.5,0.5) circle (1pt );
\draw[fill=gray!30!white] (1,1) circle (1pt );
\draw[fill=gray!30!white] (1.5,2) circle (1pt );
\draw[fill=gray!30!white] (2,3) circle (1pt );
\draw[fill=gray!30!white] (2.5,2.5) circle (1pt );
\draw[fill=gray!30!white] (3,1.5) circle (1pt );
\draw[fill=gray!30!white] (3.5,1) circle (1pt );
\draw[fill=gray!30!white] (4,0.5) circle (1pt );
\end{tikzpicture}

The mixed-state schemes, i.e. those realizing non-self-dual access structures,  also show a similar but not exactly the same ``tent-like'' behavior in that $\s(B_n) \geq \s(S)$. 

%

\end{document}